\DeclareSymbolFont{letters}{OT1}{cmtt}{m}{n}
\DeclareMathSymbol{,}{\mathpunct}{operators}{"2C}
\begin{document}

\title{A distance-based tool-set to track inconsistent urban structures through complex-networks}
\titlerunning{Tracking Inconsistent Urban Structures}

\author{Gabriel Spadon\inst{1} \and
Bruno B. Machado\inst{2} \and\\
Danilo M. Eler\inst{3} \and
Jose F. Rodrigues-Jr\inst{1}}
\authorrunning{G. Spadon et al.}

\institute{%
University of Sao Paulo, Sao Carlos -- SP, Brazil \and
Federal University of Mato Grosso do Sul, Ponta Pora -- MS, Brazil \and
Sao Paulo State University, Presidente Prudente -- SP, Brazil\\ \ \\
\email{spadon@usp.br},
\email{bruno.brandoli@ufms.br},\\
\email{daniloeler@fct.unesp.br},
\email{junio@icmc.usp.br}}

\maketitle

\begin{abstract}
Complex networks can be used for modeling street meshes and urban agglomerates.
With such a model, many aspects of a city can be investigated to promote a better quality of life to its citizens.
Along these lines, this paper proposes a set of distance-based pattern-discovery algorithmic instruments to improve urban structures modeled as complex networks, detecting nodes that lack access {\it from}/{\it to} points of interest in a given city.
Furthermore, we introduce a greedy algorithm that is able to recommend improvements to the structure of a city by suggesting where points of interest are to be placed.
We contribute to a thorough process to deal with complex networks, including mathematical modeling and algorithmic innovation.
The set of our contributions introduces a systematic manner to treat a recurrent problem of broad interest in cities.
\keywords{Complex Network; Network Analysis; Urban Structure.}
\end{abstract}

\section{Introduction and Related Works}
\label{sect:introduction}

The synergy of real-world systems can be described as complex networks that exchange information through their entities' relationships.
Such networks can model complex systems from neuronal networks to subway systems~\cite{Boccaletti2006:StructureDynamics} and also, they can shape cities when linking the network topology with georeferenced data.

By analyzing the complex network of a city, it is possible to extract features that can describe urban problems, which are meaningful indicators for city planners~\cite{Porta2009:StreetCentrality}.
Such features can reveal, for instance, sites where social activities are more intense, regions where facilities should be placed, and neighborhoods that lack street access.
Particularly, these networks can expose distance-based inconsistencies, which is how we refer to nodes that lack efficient street access {\it from}/{\it to} others in the same network, possibly resulting in structural bottlenecks.

Along these lines, we identified a lack of methods to analyze and improve the structure, mobility, and street access of cities.
Consequently, in this paper, we contribute with a mathematical tool-set and algorithms to track distance-based inconsistencies by analyzing the complex-network topology of a city.
Our results have implications for the street access, supporting a finer street planning by enhancing mobility indicators and providing better city's structural assessment.

The core assumption of this study is that the network is supposed to provide streets that render the shortest distance between places.
In this regard, our tool-set uses two distance-functions to track nodes that do not provide shortest distance routes between them and other nodes that are of some interest.
Nodes that fail in providing minimum-length routes are considered to be inconsistent nodes, which are evidence of problems in the city structure.
Accordingly, in the face of an inconsistency, we raise two hypotheses:
{\bf (A)} the network lacks a more appropriate mesh; or, instead,
{\bf (B)} the city lacks its points of interest placed in better locations.
The first one indicates the need for new points of interest to distribute their load.
Contrarily, the second one indicates the need for relocating points of interest because the topology of the terrain cannot afford new streets.

A vast number of studies have been conducted to analyze inherent properties and behaviors found in cities.
For instance, multiple metrics have been adopted to explain their structural conditions~\cite{Masucci2013:UrbanGrowth}, their intense traffic of vehicles~\cite{Kaczynski2014:RoadTraffic}, and the emergence of collective behavior~\cite{Sopan2013:NationNeighbors}.
In other studies, the authors centered on the geometrical perspective of the network~\cite{Corcoran2015:InferringGeometry}, and on the elements positioning~\cite{Barthelemy2008:StreetPatterns, Zhong2014:DetectingDynamics}.
Furthermore, there are those who reviewed the role of the city elements~\cite{Crucitti2006:SpatialCentrality, Costa2010:TransportationSystems}, that addressed the support to the urban planning and design~\cite{Strano2012:GoverningEvolution,Spadon2017:UrbanInconsistencies}, and that improved the facility-location analysis and planning of street meshes~\cite{Li2016:SpatialOptimization}.
In addition to the ones that inspected the effectiveness of the underground systems~\cite{Costa2010:TransportationSystems} and the improvement of long-range connections~\cite{Viana2011:LongRange}, besides those who defined the concept of accessibility through complex networks and cities~\cite{Travencolo2008:Accessibility} and that tested the centrality of cities considering their space syntax~\cite{Crucitti2006:UrbanCentrality, Porta2006:Dual, Cardillo2006:PlanarGraphs}.

In this paper, we contribute with a tool-set that improve the analysis of cities by tracking inconsistent urban structures through complex networks.
This proposal follows by showing that the distance between two nodes can reveal ill-located points of interest and that such information can be used to make a city better distance-efficiently to their citizens.
To present our contributions, this paper is organized as follows:
Section~\ref{sect:mathematics} discusses our mathematical formulation and related algorithms;
Section~\ref{sect:results-discussions} discusses the results about the applicability of the proposed tool-set; and,
Section~\ref{sect:conclusion} presents the conclusions and final remarks.

\section{Mathematical Formulation and Algorithms}
\label{sect:mathematics}

\subsection{Preliminaries}

Along the text, we refer to a complex network as a distance-weighted directed graph $G=\{V, E\}$, which is composed by a set $V$ of $|V|$ nodes and a set $E$ of $|E|$ edges.
To model a city as a complex network, we considered streets as the edges and their crossings as the nodes, aiming to preserve their element's geometry.
An edge $e \in E$ is an ordered pair $\left<i, j \right>$, in which $i \in V$ is named {\it source} and $j \in V$ is named {\it target}, $i \neq j$.
Each node $i \in V$ has two properties $\{\mathcal{L}_{at}^{i}, \mathcal{L}_{on}^{i}\}$ that correspond to their coordinates --- $\mathcal{L}_{at}^{i}$ is the latitude and $\mathcal{L}_{on}^{i}$ the longitude.
Based on such coordinates, we conferred to the edges in $E$ a floating-point weight that refers to the great-circle (or {\it inline}) distance between their {\it source} and {\it target}.

Our mathematical tool-set tracks inconsistencies identified through distance functions to detect which element does not follow a pattern.
The pattern that we consider refers to the real-world distance between the nodes of the network, which in turn can provide insights about the locomotion through the city streets.
In this regard, we begin by tracking a set $P \subset V$ of points of interest;
the idea, then, is to determine two sets of nodes that surrounds a point of interest $p \in P$, which can reveal the city’s inconsistencies through applying algebraic operations.
We introduce these two sets as the {\it perimeter set of p} and the {\it network set of p}.

\subsection{Grouping nodes in the surroundings of points of interest}
\label{subs:sets}

The first set matches the closest nodes to a point of interest $p$ according to the great-circle (or {\it inline}) distance, which is referred as the {\it perimeter set of p}:
\begin{equation}
    {V}_{p}^{E}=\{v \in V | d_{vp}^{E}<d_{v\bar{p}}^{E}, \forall p \in P, \forall \bar{p} \in P, p \neq \bar{p}\}
    \label{eq:perimeter-set}
\end{equation}

\noindent
where $d_{ij}^{E}$ is the great-circle distance between $i$ and $j$ in the surface of Earth:
\begin{equation}
    d_{ij}^{E} = \mathcal{R} \times arcos \left(sin(\mathcal{L}_{at}^{i}) sin(\mathcal{L}_{at}^{j}) + cos(\mathcal{L}_{at}^{i}) cos(\mathcal{L}_{at}^{j}) cos(\bigtriangleup_{\mathcal{L}_{on}^{ij}})\right)
    \label{eq:great-circle}
\end{equation}

\noindent
where $\mathcal{L}_{at}^{i}$ and $\mathcal{L}_{at}^{j}$ are the latitudes, $\bigtriangleup_{\mathcal{L}_{on}^{ij}}$ is the difference between the longitudes $\mathcal{L}_{on}^{i}$ and $\mathcal{L}_{on}^{j}$, both of nodes $i$ and $j$.
Also, $\mathcal{R}$ is the radius of Earth (6,378 km), and all values are represented in radians.
Given a graph $G = \{V, E\}$ and a set of points of interest $P$, a node $v \in V$ pertains to the perimeter set of only one $p \in P$.

\begin{center}
    $\therefore$ $V_p^{E}$ and $V_{\bar{p}}^{E}$ --- $\forall p \in P, \forall \bar{p} \in P, p \neq \bar{p}$ --- are mutually disjoint.
\end{center}

The second set corresponds to the {\it network set of p}, which is made of nodes closest to a point of interest $p$ according to the length of their shortest paths:
\begin{equation}
    V_p^{N}=\{v \in V | d_{vp}^{N}<d_{v\bar{p}}^{N}, \forall p \in P, \forall \bar{p} \in P, p \neq \bar{p}\}
    \label{eq:network-set}
\end{equation}

\noindent
where $d_{ij}^{N}$ is the length of the shortest directed path ($spath_{ij}$) between $i$ and $j$ --- {\it i.e.} the sum of weights of all the edges in a minimum-length path, as follows:
\begin{equation}
    minimum\_length(spath_{ij}) = \sum_{e \in spath_{ij}}^{} weight(e)
    \label{eq:shortest-path}
\end{equation}

Recall that, the edge weight is given by the straight-line distance between their nodes using the great-circle distance (see Equation~\ref{eq:great-circle}).
We refer to the shortest path length as {\it network distance}, in the sense that one must necessarily (in the best case) move across this path to go from the source node to the target node.
Notice that any node $v \in V$ is network-closest to one and only one $p \in P$.

\begin{center}
    $\therefore$ $V_p^{N}$ and $V_{\bar{p}}^{N}$ --- $\forall p \in P, \forall \bar{p} \in P, p \neq \bar{p}$ --- are mutually disjoint.
\end{center}

In cases where the complex network is directed, the {\it network-distance TO a point of interest} is not necessarily the same as the {\it network-distance FROM a point of interest}, which may result in different network sets for the same $p$.
This detail is addressed in the following section, where we define the network set from a point of interest to the nodes in $V$ by mean of the {\it reversed network-set of p}:
\begin{equation}
    {\bar{V}}_p^{N}=\{v \in V | d_{pv}^{N}<d_{\bar{p}v}^{N}, \forall p \in P, \forall \bar{p} \in P, p \neq \bar{p}\}
    \label{eq:reversed-set}
\end{equation}

\subsection{Compartmentalizing inconsistencies for directed networks}
\label{subs:inconsistency-compartmentalization}

Consider different public services of a city as points of interest; such services may have different ways to assist the population, but all of them must require locomotion as a condition for assistance.
For example, in the case of doctors' clinics, it is desired that patients get there efficiently.
In turn, police stations require that their police officers efficiently reach the house of the citizens.
In the case of schools, the daily routine demands an efficient back-and-forth transit to students.
Along with other services that can be fitted with this assumption.
Notice that, we are referring to efficient paths as the ones with minimum length.

In the first example, there is an implicit displacement from a node $v$ to a node $p$; in the second one, the displacement is from the node $p$ to the node $v$; and, in the third case, there is a bi-directional displacement between $v$ and $p$, in which $v$ is an ordinary node and $p$ is a specific point-of-interest.
Based on the network direction, those three cases led to the following definitions:

\begin{enumerate}
    \item {\bf Inward Inconsistency:} nodes that are inline-closest to a point of interest, but network-closest (from $v$ to $p$, as given by Equation~\ref{eq:network-set}) to a different one:
    \begin{equation}
        {\Psi}_p^I= V_p^E - V_p^N
        \label{eq:inward-inconsistency}
    \end{equation}
    
    \item {\bf Outward Inconsistency:} the same as the previous category, but in the opposite direction (from $p$ to $v$, as given by Equation~\ref{eq:reversed-set}), resulting in the set:
    \begin{equation}
        {\Psi}_p^O= V_p^E - \bar{V}_p^N
        \label{eq:outward-inconsistency}
    \end{equation}
    
    \item {\bf Absolute Inconsistency:} nodes that are, simultaneously, considered to be inward and outward inconsistencies --- {\it i.e.} nodes in the sets' intersection:
    \begin{equation}
        {\Psi}_p^A={\Psi}_p^I \cap {\Psi}_p^O
        \label{eq:absolute-inconsistency}
    \end{equation}
\end{enumerate}

As mentioned, these categories rely on the direction of the network.
In cases where there is no direction, there will be no minimum-length divergence between paths of a round trip, but yet the inconsistencies can be tracked by calculating the difference between the perimeter set $V_p^E $ and the network set $V_p^N$ of $p$.
To provide further discussion, hereinafter we are considering just directed networks.

\subsection{Tracking distance-based inconsistencies}
\label{eq:tracking-inconsistencies}

In this section, we discuss Algorithm~\ref{alg:inconsistency-tracker} that joins the concepts that we previously introduced.
The aim of such algorithm is to track distance-based inconsistencies in distance-weighted directed networks by using a set $P$ of $|P|$ of points of interest.
Notice that, despite the definition of inconsistency is segmented into three types (see Section~\ref{subs:inconsistency-compartmentalization}), the algorithm considers a single inconsistency type at a time.

The algorithm starts by filling a set of empty sets, each one reserved to store the inconsistencies of a single point of interest (see lines~\ref{ln:alg-1:variables} to~\ref{ln:alg-1:initialization}).
Subsequently, we use $p^E$ and $p^N$ to store, respectively, the inline-closest and network-closest points of interest to a node $v \in V$ (see lines~\ref{ln:alg-1:inline-variable} and~\ref{ln:alg-1:network-variable}).
We used the external functions {\tt inline\_closest} and {\tt network\_closest} (see lines~\ref{ln:alg-1:inline-closest} and~\ref{ln:alg-1:network-closest}) to extract the closest point of interest to the node $v$; they implement, respectively, Equation~\ref{eq:great-circle} and~\ref{eq:shortest-path}.
Following, we perform a test to check whether a node is an inconsistency or not; thus, if the inline-closest point $p^E$ and the network-closest point $p^N$ are not the same (see line~\ref{ln:alg-1:inconsistency-test}) then $v$ is an inconsistency of $p^E$ (see line~\ref{ln:alg-1:inconsistency-set}).
Finally, a set of the inconsistencies of $|P|$ points of interest is returned as the result (see line~\ref{ln:alg-1:return-clause}).

\begin{algorithm2e}[!htb]
    \DontPrintSemicolon
    \LinesNumbered
    \KwData{$G = \{V, E\}$, $P \subset V$, and $c \in \{I, O, A\}$ --- $c$ is used to indicate the direction}
    \KwResult{$\left\{ {\Psi}_p^{c}, \forall p \in P \right\}$  --- a set of inconsistencies for all points of interest $p \in P$}

    ${\Psi}^{c} \leftarrow \emptyset$\;\label{ln:alg-1:variables}

    \For{{\bf each} $\ p \in P $}{
        $ {\Psi}^{c}_{p} \leftarrow \emptyset $\tcp*[l]{notice that ${\Psi}^{c}_{p} \subset {\Psi}^{c}, \forall p \in P$, therefore $|{\Psi}^{c}| = |P|$}\label{ln:alg-1:initialization}
    }

    \For{{\bf each} $\ v \in V $}{
    
        $ p^E \leftarrow \emptyset $\;\label{ln:alg-1:inline-variable}
        $ p^N \leftarrow \emptyset $\;\label{ln:alg-1:network-variable}

        \For{{\bf each} $\ \bar{p} \in P $}{
            $ p^E \leftarrow {\tt inline\_closest}(v, \left<p^E, \bar{p}\right>) $\;\label{ln:alg-1:inline-closest}
            $ p^N \leftarrow {\tt network\_closest}(v, \left<p^N, \bar{p}\right>, c) $\;\label{ln:alg-1:network-closest}
        }
        
        \If{$ p^E \neq \emptyset $ {\bf and} $p^N \neq \emptyset $}{
            \If{$ \left\{p^E\right\} - \left\{p^N\right\} \neq \emptyset $}{\label{ln:alg-1:inconsistency-test}
                $ {\Psi}_{p^E}^{c} \leftarrow {\Psi}_{p^E}^{c} \cup \{ v \} $\tcp*[l]{$v$ should be closer to $p^E$ than to $p^N$}
                \label{ln:alg-1:inconsistency-set}}}}
    \KwRet{${\Psi}^{c}$}\label{ln:alg-1:return-clause}
    \caption{An algorithm to track distance-based inconsistencies in cities modeled as networks.
    We use $p^E$ and $p^N$ to refer to the closest points of interest to a node $v$ considering, respectively, the inline and the network distances; other methods are related to the ones of Section~\ref{subs:sets}.}
    \label{alg:inconsistency-tracker}
\end{algorithm2e}

Given a graph $G = \{V, E\}$, a set $P$ of $|P|$ points of interest, and an inconsistent node $i$; such node is known to be an inconsistency to one and only one $p \in P$.

\begin{center}
    $\therefore$ ${\Psi}^{c}_{p}$ and ${\Psi}^{c}_{\bar{p}}$ --- $\forall p \in P, \forall \bar{p} \in P, p \neq \bar{p}$, $c \in \{I, O, A\}$ --- are mutually disjoint.
\end{center}

\noindent
Consequently, it is possible to derive two other sets from a point of interest $p$: {\bf (i)} the inconsistency set ${\Psi}^{c}_{p}$; and {\bf (ii)} the set of consistent nodes $\bar{{\Psi}}^{c}_{p} = V^{E}_{p} - {\Psi}^{c}_{p}$, such that $\bar{{\Psi}}^{c}_{p} \cap {\Psi}^{c}_{p} = \emptyset$.
The consistent nodes are fundamental to the process of suggesting locations to points of interest because they provide a smaller average distance to the nodes in their perimeter, different than an inconsistent node.

\subsection{Reducing distance-based inconsistencies}
\label{subs:reducing-inconsistencies}

In this section, we introduce Algorithm~\ref{alg:inconsistency-reducer}, which was designed to suggest changes in the location of points of interest to improve their access through the streets of a city.
The task of finding a perfect location for a point of interest might demand the test of all possibilities through an exhaustive search.
Consequently, our algorithm has a greedy approach that uses information about centrality metrics to guide the placement of a point of interest.
Centrality is not only an adequate technique to quantify the importance of a node but also it is capable to indicate central locations that are equally accessible to all nodes of a network.

Along these lines, we decided to adopt {\bf\textit{Straightness Centrality}}~\cite{Crucitti2006:SpatialCentrality} as the centrality metric of Algorithm~\ref{alg:inconsistency-reducer} because it analyzes the nodes of a network by joining both inline and network distances.
It is noteworthy that any distance-based centrality metric could be employed, as well as multiple metrics together; however, different metrics tend to provide dubious or bad choices for a relocation.

\begin{algorithm2e}[!b]
    \DontPrintSemicolon
    \LinesNumbered
    \KwData{$G = \{V, E\}$, $P \subset V$, and $c \in \{I, O, A\}$ --- $c$ is used to indicate the direction}
    \KwResult{{\it $R$} --- set of suggested positions for points of interest}
    
    $R \leftarrow \emptyset $
    $\bar{P} \leftarrow \emptyset$
    $\tilde{P} \leftarrow \emptyset$\;\label{ln:alg-2:variables}
    
    ${\Psi}^{c} \leftarrow {\tt algorithm\_1}(G, P, c)$\;\label{ln:alg-2:tracking-inconsistencies}
    ${\Phi}^{c} \leftarrow {\Psi}^{c}$\tcp*[l]{copy of the original set}

    \While{$|P| - |\bar{P}| > 0$ {\bf and} $\left( \sum_{i=1}^{|P|} |{\Psi}^{c}_{i}| \geq \sum_{i=1}^{|P|} |{\Phi}^{c}_{i}| \right)$}{\label{ln:alg-2:loop-until}
    
        $old_p \leftarrow \emptyset$\;
        $new_p \leftarrow \emptyset$\;
        
        \For{{\bf each} $\ p \in (P - \bar{P}) $}{\label{ln:alg-2:greedy-search}
            $G_{p}^{E} \leftarrow G\left(  V^{E}_{p} - {\Psi}^{c}_{p} \right)$\tcp*[l]{induced subgraph of consistent nodes}\label{ln:alg-2:induced-graph}
            $\bar{p} \leftarrow {\tt extract\_central}(G_{p}^{E})$\;\label{ln:alg-2:centrality}
            $\mathbb{P} \leftarrow \left( \left( \left( P - \bar{P} \right) \cup \tilde{P} \right) - \{p\} \right) \cup \{ \bar{p} \}$\;\label{ln:alg-2:temporary-relocation}
            ${\Omega^{c}} \leftarrow {\tt algorithm\_1}\left(G, \mathbb{P}, c \right)$\;\label{ln:alg-2:update-option}
            
            \If{$\left( \sum_{i=1}^{|P|} |{\Phi}^{c}_{i}| > \sum_{i=1}^{|P|} |{\Omega}^{c}_{i}| \right)$}{\label{ln:alg-2:check-minimum}
                ${{\Phi}^{c}} \leftarrow {{\Omega}^{c}}$\tcp*[l]{new lowest number of inconsistencies}\label{ln:alg-2:relocation-start}
                $old_p \leftarrow p$\;
                $new_p \leftarrow \bar{p}$\;
            }\label{ln:alg-2:relocation-ends}
        }
        
        \eIf{$old_p \neq \emptyset$ {\bf and} $new_p \neq \emptyset$}{\label{ln:alg-2:integrity-test}
            $R \leftarrow R \cup \{ old_p, new_p \}$\tcp*[l]{$old_p$ was moved to $new_p$}\label{ln:alg-2:do-relocate}
            $\bar{P} \leftarrow \bar{P} \cup \{ old_p \}$\tcp*[l]{old location}\label{ln:alg-2:remove-point}
            $\tilde{P} \leftarrow \tilde{P} \cup \{ new_p \}$\tcp*[l]{new location}\label{ln:alg-2:insert-point}
        }{
            {\bf break}\tcp*[l]{there are no more enhancements to be made}\label{ln:alg-2:break-code}
        }
    }
    \KwRet{$R$}\label{ln:alg-2:return-clause}
    \caption{An algorithm that uses the contributions of Algorithm~\ref{alg:inconsistency-tracker} to reduce distance-based inconsistencies of cities shaped as networks.}
    \label{alg:inconsistency-reducer}
\end{algorithm2e}

Our algorithm starts by initializing auxiliary variables (see line~\ref{ln:alg-2:variables}) and by tracking the inconsistent nodes in the original version of the network (see line~\ref{ln:alg-2:tracking-inconsistencies}).
In line~\ref{ln:alg-2:loop-until}, it starts looping until all points of interest have been replaced or until there are no more inconsistencies to be reduced from the original network.
After that, it tries to change one point of interest at a time (see line~\ref{ln:alg-2:greedy-search}).
The candidates to host a point of interest pertains to the induced subgraph $G^{E}_{p}$ of consistent nodes (see line~\ref{ln:alg-2:induced-graph}).
By using the induced subgraph the algorithm searches for the node that has the highest centrality value among all the other ones (see line~\ref{ln:alg-2:centrality}).

The algorithm continues by testing the highest central node as the new location to the point of interest; such that, it temporally replaces the node (see line~\ref{ln:alg-2:temporary-relocation}) and then it collects information about the inconsistencies of this network configuration (see line~\ref{ln:alg-2:update-option}).
Following, it tests whether the new configuration causes fewer inconsistencies then the previous one (see line~\ref{ln:alg-2:check-minimum}) before marking the node for relocation (see lines~\ref{ln:alg-2:relocation-start} to~\ref{ln:alg-2:relocation-ends}).
In a greedy fashion, it first selects the point of interest that by being replaced will lead to the highest elimination of inconsistencies.
After choosing the one to be replaced, we perform integrity tests, we mark the node as relocated, and then we remove it (see lines~\ref{ln:alg-2:integrity-test} to~\ref{ln:alg-2:insert-point}).

The algorithm ends when there are no more profitable changes (see line~\ref{ln:alg-2:break-code}).
It is noteworthy to mention that each point of interest can be moved only once; this is due to the greedy nature of the algorithm.
Otherwise, it would run until there are no more inconsistencies in the network at a prohibitive computational cost.
The output of the algorithm is a set $R$ of new locations (see line~\ref{ln:alg-2:return-clause}); each element $r \in R$ is an ordered pair $r=\left<old_p,\ new_p\right>$ that denotes the current ($old_p$) node where a point of interest is and a better node ($new_p$) for placing it.

Algorithm~\ref{alg:inconsistency-reducer} runs in $O(|V||P|^3)$ in the average case, where $|P|$ is the number of points of interest and $|V|$ is the number of nodes, $|P| \ll |V|$.
Besides that, the algorithm was designed to be straightly parallelized; and, moreover, in our tests, it took less than a minute to compute a whole city with 200,000 inhabitants.

\subsection*{Correctness of the algorithm formulation}
\label{subs:mathematical-proof}

In this section, we demonstrate that Algorithm~\ref{alg:inconsistency-reducer} is finite and it never increases the number of inconsistencies of a city, as required by the problem formulation.

\begin{theorem}
We hypothesize that Algorithm~\ref{alg:inconsistency-reducer} provides a set of central and consistent nodes that can replace specific points of interest in a city because replacing them will {\bf\textit{never increase the total number of inconsistencies}}.
\label{th:inconsistency-theorem}
\end{theorem}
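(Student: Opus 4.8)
\section*{Proof Proposal}

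The plan is to split the claim into the two assertions bundled in the theorem: \textbf{(i)} termination (the algorithm is finite), and \textbf{(ii)} monotonicity (the total inconsistency count is never increased). I would handle termination first, since it is the more mechanical half. The outer \textbf{while} loop of Algorithm~\ref{alg:inconsistency-reducer} is controlled by the guard $|P|-|\bar P|>0$; each successful pass through the loop body either executes the relocation branch, which strictly enlarges $\bar P$ by one element (line~\ref{ln:alg-2:remove-point}), or hits the \textbf{break} (line~\ref{ln:alg-2:break-code}). Since $\bar P \subseteq P$ and $P$ is finite with $|P|$ fixed, at most $|P|$ relocations can occur, so the loop runs at most $|P|$ times; the inner \textbf{for} over $P-\bar P$ and the calls to \texttt{algorithm\_1} (which itself iterates over $V$ and $P$) are all finite. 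This gives a bound of roughly $O(|V||P|^2)$ iterations of elementary work per \texttt{algorithm\_1} call, consistent with the stated $O(|V||P|^3)$; I would present the finiteness argument at exactly this level of detail, citing the strict monotone growth of $|\bar P|$ as the termination measure.

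For monotonicity, the key object is the potential $\Phi^c$ and the scalar quantity $S(\Psi) := \sum_{i=1}^{|P|}|\Psi^c_i|$, the total number of inconsistent nodes. I would argue by a loop-invariant: at the top of every iteration of the \textbf{while} loop, $S(\Phi^c) \le S(\Psi^c)$, where $\Psi^c$ is the original inconsistency set computed once at line~\ref{ln:alg-2:tracking-inconsistencies} and never modified. The base case is immediate from the copy $\Phi^c \leftarrow \Psi^c$ on the line following~\ref{ln:alg-2:tracking-inconsistencies}. For the inductive step, observe that $\Phi^c$ is overwritten only inside the conditional at line~\ref{ln:alg-2:check-minimum}, and only when $S(\Phi^c) > S(\Omega^c)$ for the freshly computed $\Omega^c$ of the candidate configuration $\mathbb P$; hence every reassignment strictly decreases $S(\Phi^c)$, so the invariant is preserved (indeed improved). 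At termination, $R$ records exactly the sequence of relocations that produced the final $\Phi^c$, and applying all of them to the original configuration $P$ yields a network whose inconsistency set is precisely that final $\Phi^c$ — so the realized city has $S(\Phi^c) \le S(\Psi^c)$ inconsistencies, which is the claim.

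The step I expect to be the main obstacle is the last clause of the previous paragraph: showing that the \emph{cumulative} effect of the relocations stored in $R$ equals the $\Phi^c$ held by the algorithm, rather than some larger set. The subtlety is that $\Phi^c$ is updated greedily inside the inner \textbf{for}, testing one candidate swap $\mathbb P$ at a time against the \emph{current} $\Phi^c$, while the actual committed configuration changes only once per outer iteration via $\bar P$ and $\tilde P$ (lines~\ref{ln:alg-2:remove-point}--\ref{ln:alg-2:insert-point}). One must check that the configuration $\mathbb P$ built at line~\ref{ln:alg-2:temporary-relocation} — namely $\bigl((P-\bar P)\cup\tilde P\bigr)-\{p\}\cup\{\bar p\}$ — is exactly ``the already-committed relocations, plus this one tentative relocation of $p$,'' so that when the best such $p$ is finally committed, the persisted $\Phi^c$ genuinely corresponds to the running city. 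I would make this precise by maintaining a second invariant: at the top of each \textbf{while} iteration, the configuration $(P-\bar P)\cup\tilde P$ is the city obtained from $P$ by the relocations in $R$, and $\Phi^c = $ the inconsistency set of that city as returned by \texttt{algorithm\_1}. Verifying that lines~\ref{ln:alg-2:do-relocate}--\ref{ln:alg-2:insert-point} preserve this second invariant — and that the $old_p,new_p$ chosen are the ones matching the $\Phi^c$ that survived the inner loop — is the delicate bookkeeping, but once it is in place the monotonicity conclusion follows by composing it with the first invariant $S(\Phi^c)\le S(\Psi^c)$.
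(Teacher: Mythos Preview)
Your proposal is correct and, in fact, considerably more careful than the paper's own argument. The underlying idea is the same in both: the guard at line~\ref{ln:alg-2:check-minimum} guarantees that $\Phi^c$ is overwritten only by a configuration with strictly fewer total inconsistencies, so the running total can never exceed the original $S(\Psi^c)$. The paper packages this as a one-step \emph{reductio}: it considers a single replacement $p\mapsto\bar p$, writes down the case split (the paper's Equation~\ref{eq:proof-case}) saying ``keep the old set unless the new one is smaller,'' and concludes $|{\Psi}_{p}^{I}|\le |V_p^E-V_p^N|$. Termination is dispatched in one sentence (``each point is moved at most once'').

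What you do differently is supply the loop-invariant scaffolding that the paper omits. Your first invariant $S(\Phi^c)\le S(\Psi^c)$ is exactly the paper's Equation~\ref{eq:proof-result} stated for the multi-iteration setting, and your termination measure $|\bar P|$ makes precise the paper's ``changed at most once'' remark. The genuinely new content in your write-up is the second invariant --- that $(P-\bar P)\cup\tilde P$ is the committed city and $\Phi^c$ is its actual inconsistency set --- which closes the gap between ``$\Phi^c$ never grows'' and ``the \emph{returned} relocations $R$ realize a city with $\le S(\Psi^c)$ inconsistencies.'' The paper's proof simply does not engage with this cumulative-versus-tentative distinction; it treats the algorithm as if it performed one swap. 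Your identification of that bookkeeping as the main obstacle is apt, and your plan for discharging it (checking that line~\ref{ln:alg-2:temporary-relocation} builds exactly ``committed-so-far plus one tentative swap'' and that the surviving $old_p,new_p$ match the surviving $\Phi^c$) is the right move. In short: same kernel, but your decomposition into two invariants buys an honest proof where the paper offers a sketch.
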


\begin{proof}
Hereinafter, aiming to prove Theorem~\ref{th:inconsistency-theorem} by reduction to absurdity, we are supposing that the use of Algorithm~\ref{alg:inconsistency-reducer} can increase the number of inconsistencies.
Bearing in mind that the type of the inconsistency has no effect on such proof, we will follow by proving the algorithm using Inward Inconsistency (see Section~\ref{subs:inconsistency-compartmentalization}).

Consider the existence of a city mapped as a complex network $G = \{V, E\}$ and a set $P$ of $|P|$ points of interest located in this same city.
We start by finding the {\it perimeter set of p} ($V_{p}^{E}$) for each $p \in P$, which is given by Equation~\ref{eq:perimeter-set}.
Subsequently, we proceed with gathering the {\it network set of p} ($V_{p}^{N}$) that is defined by Equation~\ref{eq:network-set}.

Following, we detect a consistent node $\bar{p}$ that is the most central by an arbitrary centrality metric.
The most central node is the one that has the highest centrality when compared to the other nodes, potentially being a better place for positioning a point of interest in a city.
We follow by replacing $p$ by the most central node $\bar{p}$ in its perimeter.
Then, we calculate the updated perimeter ($V_{\bar{p}}^{E}$) and network ($V_{\bar{p}}^{N}$) sets, both of $\bar{p}$.
Notice that $p \neq \bar{p}$, thus $V_{p}^{E} \neq V_{\bar{p}}^{E}$ and $V_{p}^{N} \neq V_{\bar{p}}^{N}$.

At this point, there are two pairs of answers, one pair for $p$ and the other one for $\bar{p}$, as follows: $\left< V_{p}^{E}, V_{p}^{N} \right>$ and $\left< V_{\bar{p}}^{E}, V_{\bar{p}}^{N} \right>$.
The algorithm we proposed will replace $p$ by $\bar{p}$ following Equation~\ref{eq:proof-case}, which corresponds to a clause saying that the sets computed from $\bar{p}$ will be used just if they provide fewer inconsistencies than the original set; otherwise, it will keep the original one without making any changes.

\begin{equation}
{\Psi}_{p}^{I} =
    \begin{cases}
        V_{p}^{E} - V_{p}^{N}, \quad &\left|V_{p}^{E} - V_{p}^{N}\right| \leq \left|V_{\bar{p}}^{E} - V_{\bar{p}}^{N}\right| \\[1mm]
        V_{\bar{p}}^{E} - V_{\bar{p}}^{N}, \quad &otherwise
    \end{cases}
\label{eq:proof-case}
\end{equation}

The algorithm ceases when all the points of interest are changed at least once or when no change will result in inconsistency elimination (see Section~\ref{subs:reducing-inconsistencies}); as such, the algorithm is guaranteed to be finite.
Therefore, by reduction to absurdity, it is an \emph{absurdum} to suppose that the number of inconsistencies increases due to the use of Algorithm~\ref{alg:inconsistency-reducer} because the algorithm provides a set with less or equal inconsistencies than the original set --- as defined by Equation~\ref{eq:proof-result}.

\begin{equation}
    \therefore \left|{\Psi}_{p}^{I}\right| \leq \left|V_{p}^{E} - V_{p}^{N}\right|
    \label{eq:proof-result}
\end{equation}

\end{proof}

\section{Results and Discussions}
\label{sect:results-discussions}

The tool-set we proposed was validated over the Brazilian city of Sao Carlos.
Such city was instantiated as a complex network through a digital map from {\it OpenStreetMap}\footnote{\url{www.openstreetmap.org}}.
We considered streets as edges and their crossings as nodes; this way, we preserved the georeferenced attributes of the city that are necessary to the distance computation of our tool-set.
The resulting network is planar and it can be represented in two dimensions, in which edges intersect only at nodes.

\subsection{Assessing inconsistency recovery}
\label{sect:automated-recovery}

In this section, we analyze the inconsistent nodes found in the city of Sao Carlos regarding the location of hospitals, police stations, and public schools, which are our points of interest; such public services are known to be affected respectively by inward, outward, and absolute inconsistencies as described in Section~\ref{subs:inconsistency-compartmentalization}.
It is noteworthy that each set of points of interest are independent, as such, the inconsistencies of one set of points have no relationship with the ones of others.

The inconsistent nodes we tracked are in Table~\ref{tbl:inconsistency-information}, which suggest that their occurrence is connected to the number of points of interest.
In fact, they appear whenever different perimeters meet; as a consequence, there is no way to eradicate them without altering the network topology by changing the streets' direction or creating new streets.
In addition, more points of interest mean more boundaries, what tends to increase their number.
Hence, the challenge is to find locations to points of interest that reduce, rather than eradicate, inconsistencies.

We used Algorithm~\ref{alg:inconsistency-reducer} so to reduce the inconsistencies from Sao Carlos (see Table~\ref{tbl:inconsistency-information}).
The algorithm suggested relocating 6 hospitals, 2 police stations, and 9 public schools; such configuration, was able to reduce 160 inconsistencies from the hospitals (from 559 to 399), 123 inconsistencies from the police stations (from 342 to 219), and 179 inconsistencies from the public schools (from 663 to 484).
Notice that the inconsistencies of some points of interest raised in number from the original to the enhanced city, which is a setback of our approach.
However, as we have already proved, the total number of inconsistencies is always smaller.

\begin{table}[!htb]
    \centering
    \resizebox{.99\linewidth}{!}{
    \rowcolors{7}{gray!25}{white}
    \begin{tabular}{c|c|c|c|c|c|c|c|c|c|c|c|c|c}
        \hline
        \multirow{4}{*}{\rotatebox[origin=c]{90}{\bf $n^{th}$ POI}} & \multicolumn{6}{c|}{\bf Original City}
        &
        \multicolumn{6}{c|}{\bf Enhanced City} & \multirow{4}{*}{\rotatebox[origin=c]{90}{\bf $n^{th}$ POI}}
        \\ \cline{2-13}
        & \multicolumn{2}{c|}{\multirow{2}{*}{\it Hospitals}}
        & \multicolumn{2}{c|}{\multirow{2}{*}{\it Police Stations}}
        & \multicolumn{2}{c|}{\multirow{2}{*}{\it Schools}}
        & \multicolumn{2}{c|}{\multirow{2}{*}{\it Hospitals}}
        & \multicolumn{2}{c|}{\multirow{2}{*}{\it Police Stations}}
        & \multicolumn{2}{c|}{\multirow{2}{*}{\it Schools}}
        & \\
        & \multicolumn{2}{c|}{} & \multicolumn{2}{c|}{} & \multicolumn{2}{c|}{}
        & \multicolumn{2}{c|}{} & \multicolumn{2}{c|}{} & \multicolumn{2}{c|}{}
        & \\ \cline{2-13}
               & \#  & \%     & \#   & \%     & \#  & \%     & \#  & \%     & \#  & \%     & \#  & \%     &           \\ \hline
         01    & 013 & 02.3\% & 032  & 09.3\% & 015 & 02.2\% & 14  & 03.5\% & 30  & 13.7\% & 19  & 03.9\% & 01        \\ \hline
         02    & 002 & 00.3\% & 004  & 01.1\% & 077 & 11.6\% & 02  & 00.5\% & 48  & 21.9\% & 13  & 02.6\% & 02        \\ \hline
         03    & 012 & 02.1\% & 086  & 25.1\% & 043 & 06.4\% & 18  & 04.5\% & 96  & 43.8\% & 37  & 07.6\% & 03        \\ \hline
         04    & 019 & 03.4\% & 029  & 08.4\% & 071 & 10.7\% & 04  & 01.0\% & 32  & 14.6\% & 58  & 11.9\% & 04        \\ \hline
         05    & 030 & 05.3\% & 191  & 55.8\% & 114 & 17.1\% & 87  & 21.8\% & 13  & 05.9\% & 57  & 11.7\% & 05        \\ \hline
         06    & 049 & 08.7\% & ---  & ---    & 003 & 00.4\% & 51  & 12.7\% & --- & ---    & 01  & 00.2\% & 06        \\ \hline
         07    & 145 & 25.9\% & ---  & ---    & 008 & 01.2\% & 26  & 06.5\% & --- & ---    & 01  & 00.2\% & 07        \\ \hline
         08    & 039 & 06.9\% & ---  & ---    & 015 & 02.2\% & 22  & 05.5\% & --- & ---    & 18  & 03.7\% & 08        \\ \hline
         09    & 012 & 02.1\% & ---  & ---    & 078 & 11.7\% & 31  & 07.7\% & --- & ---    & 77  & 15.9\% & 09        \\ \hline
         10    & 043 & 07.6\% & ---  & ---    & 051 & 07.6\% & 63  & 15.7\% & --- & ---    & 48  & 09.9\% & 10        \\ \hline
         11    & 072 & 12.8\% & ---  & ---    & 038 & 05.7\% & 45  & 11.2\% & --- & ---    & 41  & 08.4\% & 11        \\ \hline
         12    & 095 & 16.9\% & ---  & ---    & 015 & 02.2\% & 17  & 04.2\% & --- & ---    & 11  & 02.2\% & 12        \\ \hline
         13    & 028 & 05.0\% & ---  & ---    & 056 & 08.4\% & 19  & 04.7\% & --- & ---    & 10  & 02.0\% & 13        \\ \hline
         14    & --- & ---    & ---  & ---    & 008 & 01.2\% & --- & ---    & --- & ---    & 16  & 03.3\% & 14        \\ \hline
         15    & --- & ---    & ---  & ---    & 060 & 09.0\% & --- & ---    & --- & ---    & 51  & 10.5\% & 15        \\ \hline
         16    & --- & ---    & ---  & ---    & 011 & 01.6\% & --- & ---    & --- & ---    & 26  & 05.3\% & 16        \\ \hline
          \rowcolor{gray!50}
     \bf Total & 559 & 100\%  & 342  & 100\%  & 663 & 100\%  & 399 & 100\%  & 219 & 100\%  & 484 & 100\%  & \bf Total \\ \hline
    \end{tabular}}
    \caption
    {Analysis of the inconsistencies of the city of Sao Carlos, in which we considered police stations, hospitals, and public schools as points of interest; we use \# to refer to the total number of inconsistencies and \% to their percentage.}
    \label{tbl:inconsistency-information}
\end{table}

\subsection{Supporting the designing of urban structures}

Our tool-set is not only to be used in the automatic recovery of inconsistencies, but also to assist human-made urban-planning decisions.
This is the case, for instance, when a specialist designs a city by having knowledge of the citizens' needs.
In this case, Algorithms~\ref{alg:inconsistency-tracker} and~\ref{alg:inconsistency-reducer} can aid the process by analyzing and recommending distance-efficiently locations that are feasible to points of interest.

This section introduces two hypothetical case studies that depict our tool-set in practice.
Both of them were conducted considering a subset of hospitals and public schools of the city of Sao Carlos (see Section~\ref{sect:automated-recovery}).
Nonetheless, our tool-set is extendable to any point of interest since it is equivalent to all of them.

Both case studies follow as in Figure~\ref{fig:example-case}, in which we start by finding a point of interest, next we try to solve the problem by ourselves, and then we use the algorithms to improve our results; all steps are guided under the light of the nodes' {\bf\textit{straightness centrality}}.
Furthermore, all case studies are represented by the induced subgraph of the point of interest being analyzed and, although we have illustrated the inconsistencies in Figure~\ref{fig:example-case}, in the case studies they are not visible because they do not provide visual information to the other images.

\begin{figure}[!htb]
    \centering
    \includegraphics[width=\linewidth]{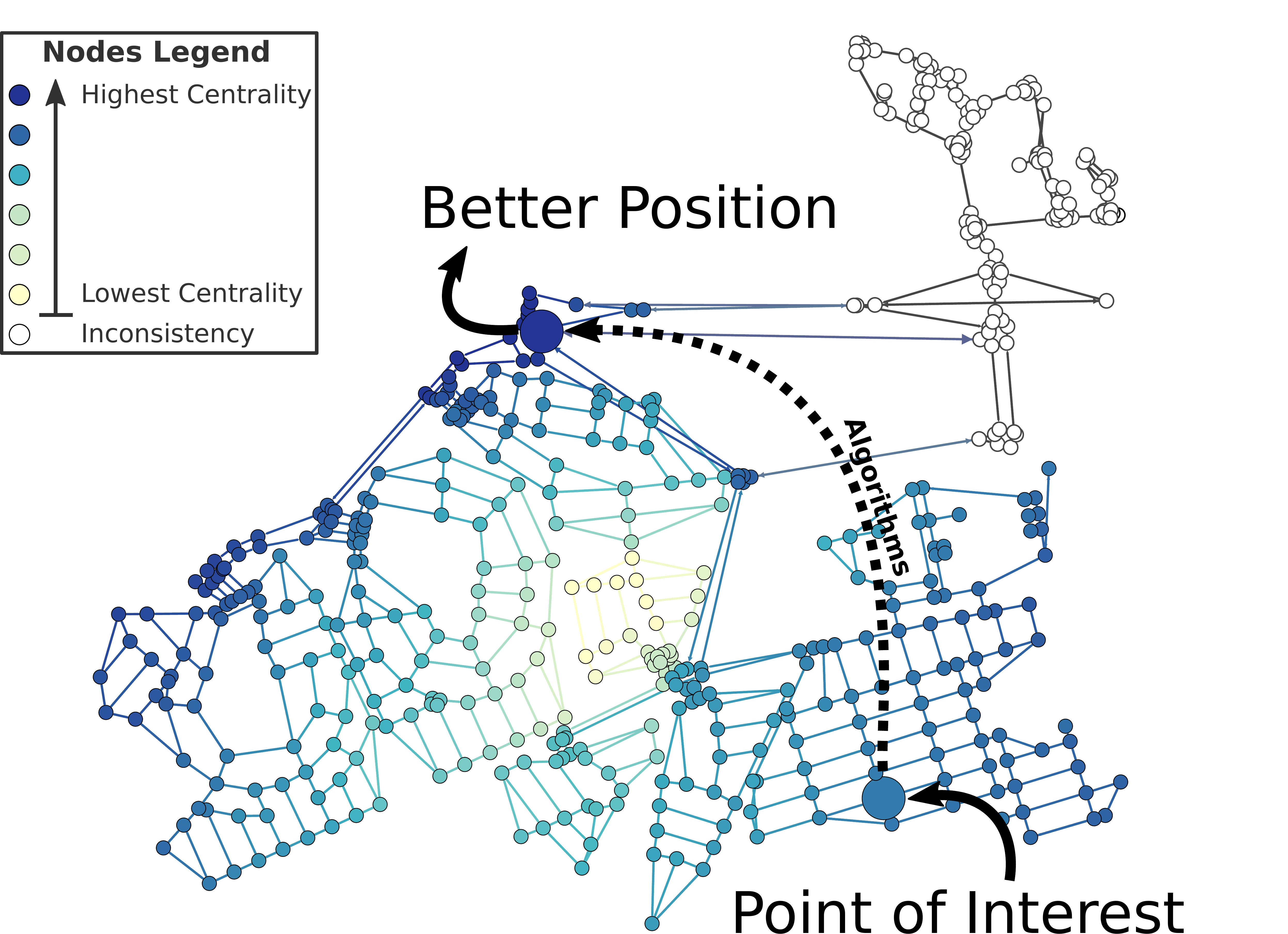}
    \caption{Illustration of the process of designing urban structures under the light of centrality metrics.
    This process starts by identifying nodes that are of interest, then it follows by tracking their inconsistencies, and it ends by suggesting new locations --- that reduce the number of inconsistencies --- to place these nodes.}
    \label{fig:example-case}
\end{figure}

\subsection*{Case Study 1: Creating a new hospital to reduce demand}

From the set of hospitals of the city of Sao Carlos, we identified one that, when compared to another hospital in the city, has excessive nodes in its perimeter (see Figure~\ref{fig:real-hospital}).
There is no specific explanation of the hospital's location and, for instance,  we can think that the city may have grown after the hospital has been built or the planners did not take the surroundings of the hospital into account.
One thing is for sure, an extensive area with an ill-positioned point of interest will deprive the street access of the nodes;
in this case, when points of interest are healthcare facilities, time-critical activities, as the transportation of patients in a critical state, can be jeopardized by lack of street access.
Hence, the problem becomes where to build a hospital and how to avoid inconsistencies.

\begin{figure}[!hbt]
    \centering
    \begin{subfigure}{.49\linewidth}
        \centering
        \includegraphics[width=\columnwidth]{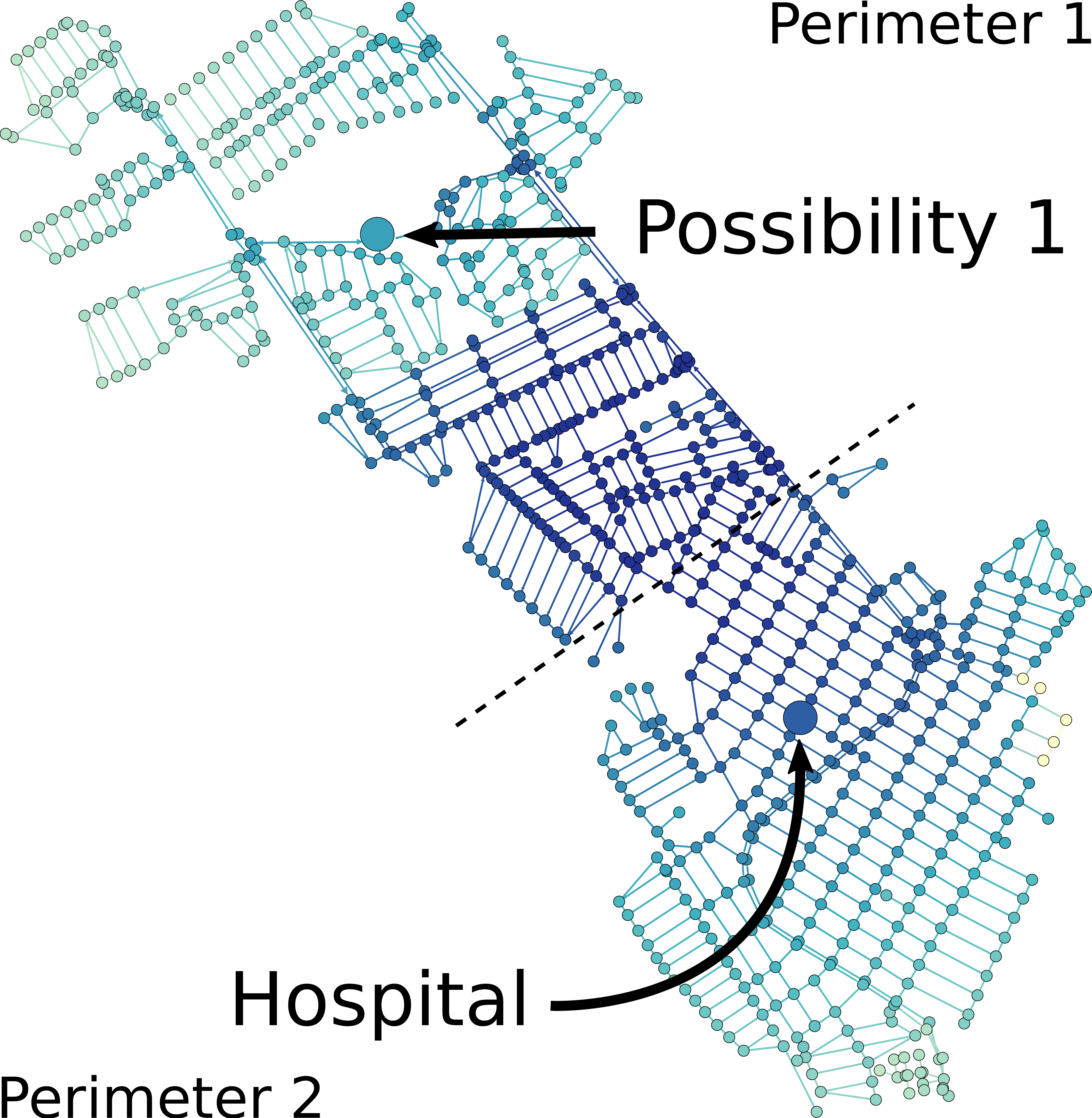}
        \caption{Original City}
        \label{fig:real-hospital}
    \end{subfigure}
    \hfill
    \begin{subfigure}{.49\linewidth}
        \centering
        \includegraphics[width=\columnwidth]{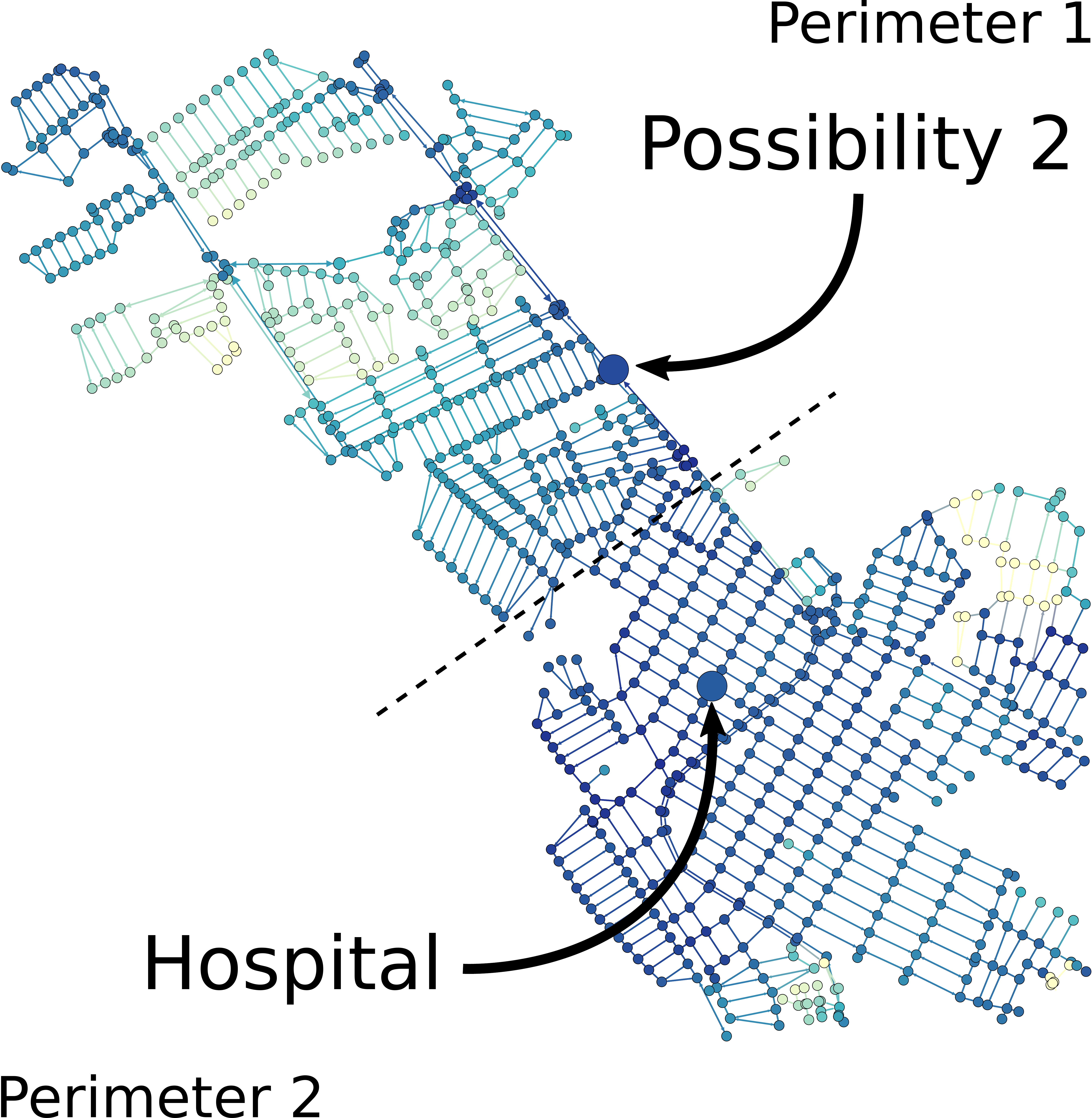}
        \caption{Enhanced City}
        \label{fig:result-hospital}
    \end{subfigure}
    \caption
    {Illustration of the assisted urban planning task from the first case study, in which the point of interest is a hospital and the color of the nodes denotes their centrality --- the darker, the higher.
    Figure~\ref{fig:real-hospital} shows a hospital's perimeter that is too large causing lack of access.
    We placed a new hospital in an eye-based central location in that same area to solve this issue.
    Afterwards, we used the algorithm to reduce inconsistencies, which suggested relocating the new hospital to a more central location that reduces the hospital's inconsistencies; as in Figure~\ref{fig:result-hospital}.}
\end{figure}

First, we tried to solve the problem manually by an eye-based analysis of a location that could provide equal nodes to the perimeters of both hospitals.
Figure~\ref{fig:real-hospital} shows a possible place to the new hospital as well as the resulting perimeter of both of them, which are defined by a line that cuts the image in half.
After that, we inserted the proposed location in the set of hospitals and we used Algorithm~\ref{alg:inconsistency-tracker} to track the inconsistencies of the resulting configuration.
Such configuration lead us to 615 inconsistencies, which is a bigger value than the original city.
Thus, we succeeded in building a hospital that splits the perimeter into two, but we failed in providing efficient access to both old and new hospital.

In a second approach, we analyzed the nodes' centrality together with a supporting visualization.
We colored the nodes by their centrality, what allowed us to notice that the selected location for the new hospital is a node with low centrality.
Then, we used Algorithm~\ref{alg:inconsistency-reducer} to suggest a better place for the new hospital while keeping the location of the old one.
Doing so, the city inconsistencies were reduced from 615 to 352 (see Figure~\ref{fig:result-hospital}), which positively reflected in the mobility of this area by distributing the demand between both hospitals.
Thus, creating a new hospital in a specific location was able to reduce almost half of the inconsistencies of the city without relocating the existing ones.

\subsection*{Case Study 2: Merging schools to centralize public resources}

In a similar fashion, we identified two public schools that are adjacent and support a short set of nodes.
In this case, the proximity of the schools (see Figure~\ref{fig:real-schools}) is a problem since none of them is used up to its capacity implying a waste of public resources.
In a first approach, by using Algorithm~\ref{alg:inconsistency-reducer} to relocate them, the number of inconsistencies was reduced from 663 to 635.

\begin{figure}[!hbt]
    \centering
    \begin{subfigure}{.49\linewidth}
        \centering
        \includegraphics[width=\columnwidth]{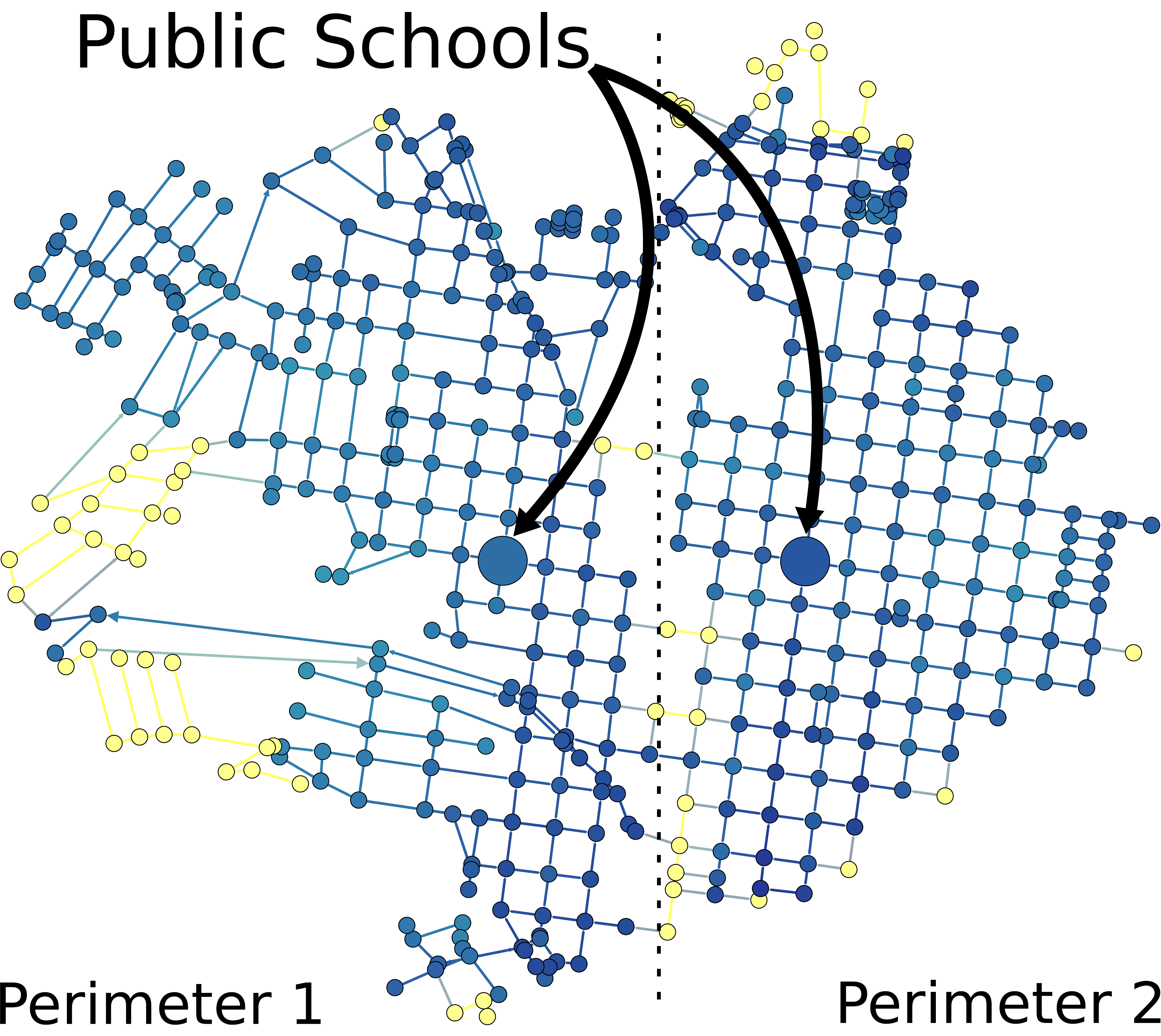}
        \caption{Original City}
        \label{fig:real-schools}
    \end{subfigure}
    \hfill
    \begin{subfigure}{.49\linewidth}
        \centering
        \includegraphics[width=\columnwidth]{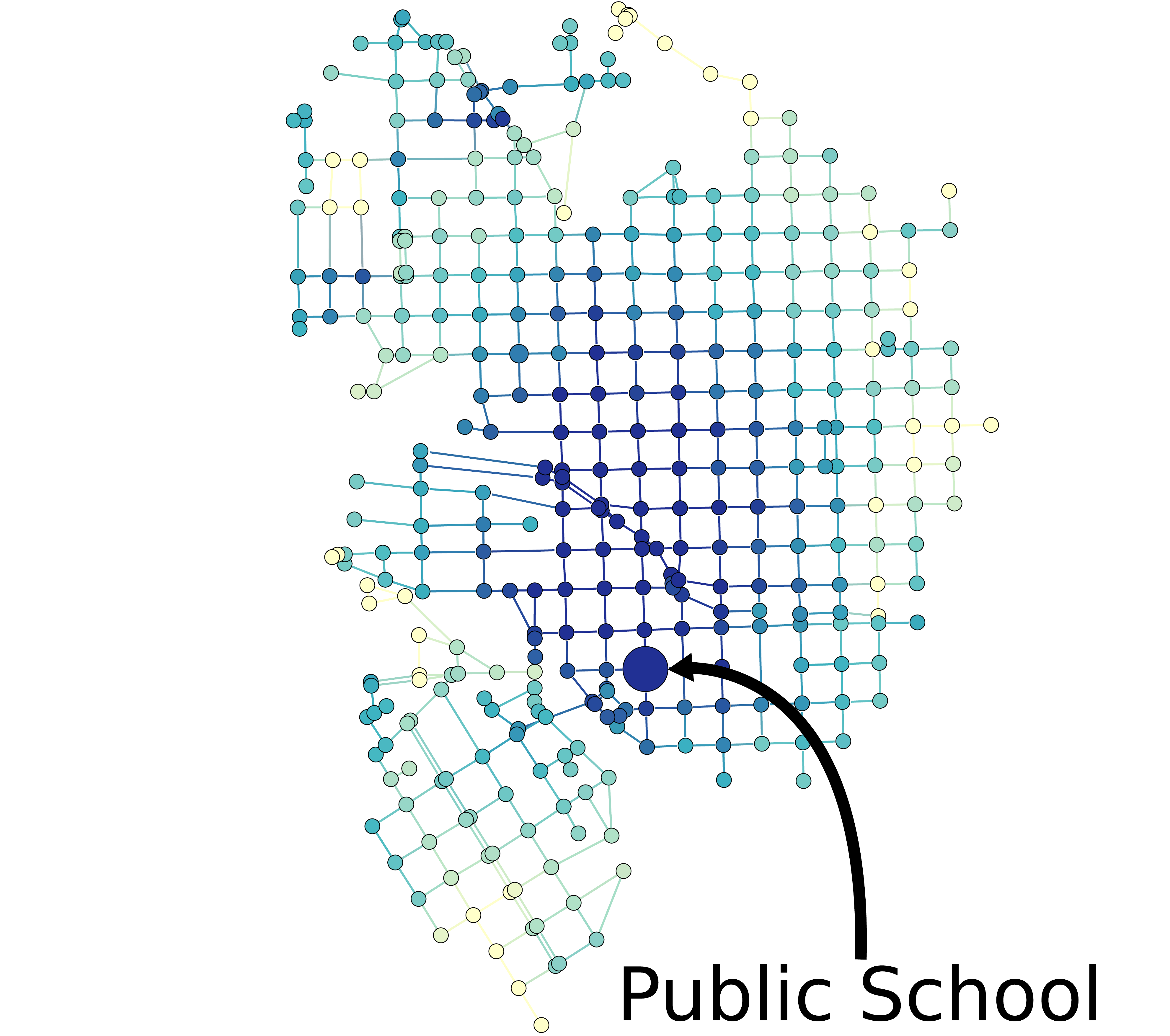}
        \caption{Enhanced City}
        \label{fig:result-school}
    \end{subfigure}
    \caption
    {Illustration of the assisted urban planning task from the second case study, in which the points of interest are public schools and the color of the nodes denotes their centrality --- the darker, the higher.
In this case study, we treated a problem related to the waste of resources that was caused by having two schools near each other; Figure~\ref{fig:real-schools} shows the problematic area, which is small, increasing the drawbacks related to access.
By replacing both schools with a single one we achieved a better coverage of nodes, as depicted in Figure~\ref{fig:result-school}.}
\end{figure}

Considering the size of the perimeter of both schools, we decided to remove one school to improve the utility of the one that remained.
By centralizing the schools in a single node, we can reduce inconsistencies because there will be fewer perimeters bordering each other; hence, the inconsistencies, located whenever two of them meet, will be naturally decreased.
To further enhance this process, we used the color-coded centrality metric to choose a candidate to be the new sole school.
Afterward, we used Algorithm~\ref{alg:inconsistency-reducer} to provide a better location (see Figure~\ref{fig:result-school}), which reduced the total number of inconsistencies from 635 to 445.

\subsection{Discussions on results generalization}

For a concise results presentation, we have assumed: {\bf (i)} that any displacement is through cities' streets; and, {\bf (ii)} a city with a uniform population distribution.
However, our tool-set holds for scenarios where these assumptions are not true.

We can use weights in accordance with the type of the displacement rather than using streets distance.
This is because our tool-set uses a general concept of weight and when providing additional information such weight can assume any quantitative value --- {\it i.e.} travel time, edge capacity, route cost, and so on.

About the population distribution, it is possible, for instance, to use a normal distribution peaked at the center of the city, multimodal distributions, or census data.
This information can aid in the analysis of urban agglomerations if it is used to assign values to sets of nodes corresponding to the population density of the area that they belong to.
Nevertheless, the set of inconsistencies would depend on the analysis of a specialist rather than being a self-explanatory result.

Also, despite being central to our problem formulation, the viability of redesigning a city is not suited for most cases. 
Furthermore, changing the topology of the network will alter the centrality of its elements, which will modify regions that attract vehicles and people.
Our tool-set is not only to be used in redesigning a city but also on the initial design when all possibilities are open.

Finally, our proposal has open problems that support further studies: 
{\bf (1)} the tool-set to track inconsistencies is categorical, then further algebra can aid in identifying the severity of a network inconsistency in a continuous, rather than binary, manner;
{\bf (2)} for simplicity's sake, we assumed the origin and destination of all paths as nodes of the network; such nodes are street intersections, which might not be real-world points of interest, requiring the addition of new nodes.

\section{Conclusion}
\label{sect:conclusion}

This paper was instantiated as a set of mathematical formalisms and algorithms to track and reduce distance-based inconsistencies improving access {\it to}/{\it from} points of interest in a city.
Beyond the mathematical formulation, we provided a proof of concept and case studies, all of which indicate that our tool-set is able to suggest better placements for points of interest at the same time that it improves the access to the majority of the nodes of a city by reducing its inconsistencies.

More specifically, our contributions are in the definition of a concept based on intrinsic problems to urban structures that are caused by the misallocation of points of interest in cities; also, in two algorithms that were devised to track and reduce inconsistent nodes in complex networks; and, finally, in a case study, in which we show how our tool-set and algorithms can aid planners and designers.

In summary, our methods were proved empirically and formally, granting potential for prompt contribution and for opening new research questions.
In addition, as a future work, we shall embrace link prediction methods for suggesting relocations in the network topology, {\it i.e.} proposing variations in the flow's direction, in the task of looking for a better topological setting for a city.

\section*{Acknowledgement}

We would like to thank the Brazilian agencies CNPq (167967/2017-7), FAPESP (2016/17078-0 and 2017/08376-0) and CAPES that fully supported this research.

\begin{small}

\end{small}

\end{document}